\def\ot{\otimes}
\def\H{\textsf{H}}\def\S{\textsf{S}}\def\T{\textsf{T}}
\newcommand{\inner}[2]{\langle #1 , #2\rangle}
\newcommand{\out}[2]{| #1\rangle\langle #2 |}
\newcommand{\trans}{{\scriptstyle\mathsf{T}}}
\newcommand{\pa}[1]{(#1)}
\newcommand{\Pa}[1]{\left(#1\right)}
\newcommand{\set}[1]{\{#1\}}
\newcommand{\ket}[1]{|#1\rangle}
\def\Jamiolkowski{J}
\newcommand{\jam}[1]{\Jamiolkowski\pa{#1}}
\DeclareMathOperator{\trace}{Tr}
\newcommand{\ptr}[2]{\trace_{#1}\pa{#2}}
\newcommand{\tr}[1]{\ptr{}{#1}}
\newcommand{\fontmapset}{\mathbf} 
\newcommand{\mset}[2]{\fontmapset{#1}\pa{#2}}
\newcommand{\lin}[1]{\mset{L}{#1}}
\newcommand{\identity}{\mathbbm{1}}
\newcommand{\idsup}[1]{\identity_{#1}}
\newcommand{\Natural}{\mathbb{N}}
\newcommand{\Real}{\mathbb{R}}
\newcommand{\Complex}{\mathbb{C}}
\def\cH{\mathcal{H}}
\def\rM{\mathrm{M}}
\newtheorem{thrm}{Theorem}[section]
\newtheorem{lem}[thrm]{Lemma}
\newtheorem{cor}[thrm]{Corollary}
\theoremstyle{definition}
\newtheorem{remark}[thrm]{Remark}
\numberwithin{equation}{section}
\newcommand{\Innerm}[3]{\langle #1 | #2 | #3 \rangle}
\author{Lin Zhang and Junde Wu}
\date{\small{\it Department of Mathematics, Zhejiang University, Hangzhou, 310027, P.~R.~China\\E-mail: linyz@zju.edu.cn, wjunde@126.com, wjd@zju.edu.cn}}
\begin{document}


\title{Von Neumann Entropy-Preserving Quantum Operations}\maketitle

\mbox{}\hrule\mbox{}\\~\\
\textbf{Abstract.} For a given quantum state $\rho$ and two quantum
operations $\Phi$ and $\Psi$, the information encoded in the quantum
state $\rho$ is quantified by its von Neumann entropy $\S(\rho)$. By
the famous Choi-Jamio{\l}kowski isomorphism, the quantum operation
$\Phi$ can be transformed into a bipartite state, the von Neumann
entropy $\S^{\mathrm{map}}(\Phi)$ of the bipartite state describes
the decoherence induced by $\Phi$. In this Letter, we characterize
not only the pairs $(\Phi, \rho)$ which satisfy
$\S(\Phi(\rho))=\S(\rho)$, but also the pairs $(\Phi, \Psi)$ which
satisfy $\S^{\mathrm{map}}(\Phi\circ\Psi) =
\S^{\mathrm{map}}(\Psi)$.
\\~\\
\textbf{Keywords:} Quantum state, Quantum operation, von Neumann entropy.\\[0.2cm]
\mbox{}\hrule\mbox{}

\section{Introduction}

Von Neumann entropy and relative entropy are powerful tools in
quantum information theory. Both quantities have a monotonicity
property under a certain class of quantum operations and the
condition of equality is an interesting and important subject. For
example, an extremely important result was obtained in the condition
of equality for the celebrated strong subadditivity inequality for
the von Neumann entropy \cite{Hayden}. On the other hand, ones use
the von Neumann entropy of quantum states to  quantify the
information encoded in the quantum states, and the map entropy of
quantum operations to describe the decoherence induced by the
quantum operations \cite{Roga}. In this Letter, we study von Neumann
entropy-preserving quantum dynamical processes. In order to present
our results, we need the following notation.

For a given $N$-dimensional quantum-mechanical system which is
represented by an $N$-dimensional complex Hilbert space $\cH$, a
state $\rho$ on $\cH$ is a positive semi-definite operator of trace
one. If $\rho = \sum_k\lambda_k\out{u_k}{u_k}$ is its spectral
decomposition, then the support $\mathrm{supp}(\rho)$ of $\rho$ is
defined by $\mathrm{supp}(\rho)= \mathrm{span}\set{\ket{u_k} :
\lambda_k>0}$, and the generalized inverse $\rho^{-1}$ of $\rho$ is
defined by $\rho^{-1} =
\sum_{k:\lambda_k>0}\lambda^{-1}_k\out{u_k}{u_k}$. The \emph{von
Neumann entropy} $\S(\rho)$ of $\rho$ is defined by $\S(\rho) = -
\tr{\rho\log_{2}\rho}$ which quantifies information encoded in the
quantum state $\rho$. If $\sigma$ is also a quantum state on $\cH$,
then the \emph{relative entropy} between $\rho$ and $\sigma$ is
defined by $\S(\rho||\sigma) = \tr{\rho(\log_{2}\rho -
\log_{2}\sigma)}$ if $\mathrm{supp}(\rho) \subseteq
\mathrm{supp}(\sigma)$; $\S(\rho||\sigma) =  +\infty$, otherwise
\cite{Nielsen,Ohya}.

Let $\lin{\cH}$ be the set of all linear operators on $\cH$. If $X,
Y \in \lin{\cH}$, then $\inner{X}{Y} = \tr{X^{\dagger}Y}$ defines
the \emph{Hilbert-Schmidt inner product} on $\lin{\cH}$. Let
$\T(\cH)$ denote the set of all linear super-operators from
$\lin{\cH}$ to itself. $\Lambda\in \T(\cH)$ is said to be a
\emph{completely positive super-operator} if for each $k \in
\Natural$, $\Lambda\ot \idsup{\rM_{k}(\Complex)}: \lin{\cH} \ot
\rM_{k}(\Complex) \to \lin{\cH}\ot \rM_{k}(\Complex)$ is positive,
where $\rM_{k}(\Complex)$ is the set of all $k\times k$ complex
matrices. It follows from the famous theorem of Choi \cite{Choi}
that every completely positive super-operator $\Lambda$ has a Kraus
representation $\Lambda = \sum_{j}\mathrm{Ad}_{M_{j}}$, that is, for
every $X\in \lin{\cH}$,  $\Lambda (X) =
\sum_{j=1}^nM_jXM_j^\dagger$, where $\set{M_j}_{j=1}^n \subseteq
\lin{\cH}$, $M_j^\dagger$ is the adjoint operator of $M_j$. It is
clear that for the super-operator $\Lambda$, there is its
\emph{adjoint super-operator} $\Lambda^{\dagger}\in\T(\cH)$ such
that for $A, B\in\lin{\cH}$, $\inner{\Lambda(A)}{B} =
\inner{A}{\Lambda^{\dagger}(B)}$. Moreover, $\Lambda$ is a
completely positive super-operator if and only if $\Lambda^\dagger$
is also a completely positive super-operator.

A so-called \emph{quantum operation} is just a trace non-increasing
completely positive super-operator $\Phi$. If $\Phi$ is trace-preserving,
then it is called \emph{stochastic}; if $\Phi$ is stochastic and
unit-preserving, then it is called \emph{bi-stochastic}.

Let $\set{\ket{i} : i = 1,\ldots,N}$ be an orthonormal basis for
$\cH$. Then $\ket{\Omega} = \sum_{i=1}^N\ket{ii} \in \cH\ot\cH$ is a
maximal entangled state on $\cH\ot\cH$. The famous
\emph{Choi-Jamio{\l}kowski isomorphism} $J:
\T(\cH)\longrightarrow\lin{\cH\ot\cH}$ transforms every
$\Theta\in\T(\cH)$ into an operator $\jam{\Theta} =
(\Theta\ot\idsup{\lin{\cH}})(\out{\Omega}{\Omega})$ of
$\lin{\cH\ot\cH}$ \cite{Choi}. If $\Theta$ is completely positive,
then $\jam{\Theta}$ is a positive semi-define operator, in
particular, if $\Theta$ is stochastic, then $\frac1N\jam{\Theta}$ is
a bipartite state on $\cH\ot\cH$. For every stochastic quantum
operation $\Phi$, we refer to the von Neumann entropy
$\S(\frac1N\jam{\Phi})$ of the bipartite state $\frac1N\jam{\Phi}$
as the map entropy of $\Phi$, and denote it by
$\S^{\mathrm{map}}(\Phi)$, it describes the decoherence induced by
the quantum operation $\Phi$ \cite{Roga}.

In this Letter, for quantum states $\rho$ and quantum operations
$\Phi$ and $\Psi$, we characterize both the pairs $(\Phi,\rho)$
which satisfy $\S(\Phi(\rho))=\S(\rho)$ and the pairs $(\Phi,\Psi)$
which satisfy $\S^{\mathrm{map}}(\Phi\circ\Psi) =
\S^{\mathrm{map}}(\Psi)$.


Firstly, we need the following lemmas. For our purpose, Theorem 5.1
in \cite{Hiai} is modified into the following form:

\begin{lem}\label{Hiai}(\cite{Hiai})
Let $\rho$ and $\sigma$ be two quantum states on $\cH$, $\Phi$ be a
stochastic quantum operation. If $\mathrm{supp}(\rho) \subseteq
\mathrm{supp}(\sigma)$, then
$$\S(\Phi(\rho)||\Phi(\sigma)) = \S(\rho||\sigma)$$ if and only if $\Phi^\dagger_\sigma \circ \Phi(\rho) = \rho,$
where $\Phi^\dagger_\sigma = \mathrm{Ad}_{\sigma^{1/2}} \circ
\Phi^\dagger \circ \mathrm{Ad}_{\Phi(\sigma)^{-1/2}}$.
\end{lem}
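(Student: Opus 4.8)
The plan is to reduce the statement to the known equality case of monotonicity of relative entropy under stochastic maps. First I would recall the general fact (Petz) that for a stochastic quantum operation $\Phi$ and states $\rho,\sigma$ with $\mathrm{supp}(\rho)\subseteq\mathrm{supp}(\sigma)$, one always has $\S(\Phi(\rho)\|\Phi(\sigma))\le\S(\rho\|\sigma)$, and that equality holds precisely when $\Phi$ is ``sufficient'' for the pair $\{\rho,\sigma\}$, i.e.\ when the Petz recovery map $\Phi^\dagger_\sigma=\mathrm{Ad}_{\sigma^{1/2}}\circ\Phi^\dagger\circ\mathrm{Ad}_{\Phi(\sigma)^{-1/2}}$ perfectly recovers $\rho$ from $\Phi(\rho)$. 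So the task is to massage Theorem~5.1 of \cite{Hiai} into exactly this form, checking that the map $\Phi^\dagger_\sigma$ as defined here is well-defined (the generalized inverse $\Phi(\sigma)^{-1/2}$ makes sense because $\mathrm{supp}(\Phi(\rho))\subseteq\mathrm{supp}(\Phi(\sigma))$, which follows from positivity of $\Phi$) and that it is the correct dual of the Petz map.

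The key steps, in order, are: (i) verify the support inclusion is preserved, $\mathrm{supp}(\Phi(\rho))\subseteq\mathrm{supp}(\Phi(\sigma))$, so that both sides of the relative entropy identity are finite and $\Phi^\dagger_\sigma$ is meaningful on the relevant support; (ii) confirm that $\Phi^\dagger_\sigma$ is completely positive (a composition of the completely positive $\Phi^\dagger$ with the conjugations $\mathrm{Ad}_{\sigma^{1/2}}$ and $\mathrm{Ad}_{\Phi(\sigma)^{-1/2}}$) and that it sends $\Phi(\sigma)$ back to $\sigma$, so it behaves like a genuine recovery channel on $\mathrm{supp}(\Phi(\sigma))$; (iii) invoke \cite[Theorem~5.1]{Hiai} to get the equivalence ``equality in relative entropy $\iff$ the Petz map recovers $\rho$'', and then (iv) translate the recovery condition stated there into the clean operator identity $\Phi^\dagger_\sigma\circ\Phi(\rho)=\rho$. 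For the forward direction one may alternatively argue directly: by the Petz theorem equality forces $\Phi$ to commute suitably with the modular structure, from which $\Phi^\dagger_\sigma\circ\Phi(\rho)=\rho$ drops out; for the converse, if $\Phi^\dagger_\sigma\circ\Phi(\rho)=\rho$ then applying monotonicity twice ($\rho\mapsto\Phi(\rho)\mapsto\Phi^\dagger_\sigma\Phi(\rho)=\rho$) sandwiches $\S(\Phi(\rho)\|\Phi(\sigma))$ between $\S(\rho\|\sigma)$ and itself, forcing equality --- here one must also check $\Phi^\dagger_\sigma(\Phi(\sigma))=\sigma$ so the second application of monotonicity lands on the right pair.

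The main obstacle I expect is purely bookkeeping rather than conceptual: making sure all the generalized inverses and support conditions line up so that the composed map $\Phi^\dagger_\sigma$ is honestly trace-nonincreasing (or trace-preserving on the appropriate subspace) and that ``$\Phi^\dagger_\sigma\circ\Phi(\rho)=\rho$'' in \cite{Hiai}'s normalization matches the convention here. In particular one has to be careful that $\Phi^\dagger$ need not be trace-preserving even though $\Phi$ is, so the recovery map is only a quantum operation (trace-nonincreasing) in general, yet it still preserves relative entropy on the relevant states --- this is exactly where the converse direction's double application of monotonicity must be justified with care. Once these normalization issues are dispatched, the lemma follows immediately from \cite[Theorem~5.1]{Hiai}.
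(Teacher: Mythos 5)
The paper does not actually prove this lemma---it is quoted directly as a restatement of Theorem~5.1 of \cite{Hiai}---and your plan takes essentially the same route: the hard direction (equality of relative entropies implies recoverability) is delegated to that theorem, while your support-inclusion checks, the verification that $\Phi^\dagger_\sigma$ is trace-preserving on $\mathrm{supp}(\Phi(\sigma))$ with $\Phi^\dagger_\sigma(\Phi(\sigma))=\sigma$, and the double-monotonicity sandwich for the converse are exactly the standard bookkeeping one would supply. Your plan is correct and consistent with the paper's (implicit) proof.
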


\begin{lem}\label{Uhlm}(\cite{Lindblad})
Let $\rho$ and $\sigma$ be two quantum states on $\cH$, $\Phi$ be a
stochastic quantum operation. Then $$\S(\Phi(\rho)||\Phi(\sigma))
\leqslant \S(\rho||\sigma).$$
\end{lem}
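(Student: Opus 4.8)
The plan is to reduce this to two standard facts about the relative entropy by passing through a Stinespring dilation of $\Phi$. First I would dispose of the trivial case: if $\mathrm{supp}(\rho)\not\subseteq\mathrm{supp}(\sigma)$ then $\S(\rho||\sigma)=+\infty$ and the inequality is immediate, so one may assume $\mathrm{supp}(\rho)\subseteq\mathrm{supp}(\sigma)$, equivalently $\rho\leqslant c\,\sigma$ for some $c>0$; then $\Phi(\rho)\leqslant c\,\Phi(\sigma)$ by positivity of $\Phi$, so $\mathrm{supp}(\Phi(\rho))\subseteq\mathrm{supp}(\Phi(\sigma))$ and the left-hand side is finite as well. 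Next, since $\Phi$ is stochastic, I would fix a Kraus representation $\Phi=\sum_{j=1}^{n}\mathrm{Ad}_{M_{j}}$ with $\sum_{j=1}^{n}M_{j}^{\dagger}M_{j}=\identity$, introduce an auxiliary space $\cK=\Complex^{n}$ with orthonormal basis $\set{\ket{j}}_{j=1}^{n}$, and set $V\colon\cH\to\cH\ot\cK$, $V=\sum_{j=1}^{n}M_{j}\ot\ket{j}$. Then $V^{\dagger}V=\sum_{j}M_{j}^{\dagger}M_{j}=\identity$, so $V$ is an isometry, and $\Ptr{\cK}{VXV^{\dagger}}=\sum_{j}M_{j}XM_{j}^{\dagger}=\Phi(X)$ for every $X\in\lin{\cH}$.

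The second step is to use invariance of relative entropy under the isometry $V$: since $V\tau V^{\dagger}$ is supported on $V\cH$ and $\log(V\tau V^{\dagger})=V(\log\tau)V^{\dagger}$ on that subspace, a direct computation with the spectral decompositions (using $V^{\dagger}V=\identity$ and cyclicity of the trace) gives $\S(V\rho V^{\dagger}||V\sigma V^{\dagger})=\S(\rho||\sigma)$. Combining this with the monotonicity of relative entropy under the partial trace, $\S\Pa{\Ptr{\cK}{\omega}||\Ptr{\cK}{\eta}}\leqslant\S(\omega||\eta)$, applied to $\omega=V\rho V^{\dagger}$ and $\eta=V\sigma V^{\dagger}$, yields
$$\S(\Phi(\rho)||\Phi(\sigma))=\S\Pa{\Ptr{\cK}{V\rho V^{\dagger}}||\Ptr{\cK}{V\sigma V^{\dagger}}}\leqslant\S(V\rho V^{\dagger}||V\sigma V^{\dagger})=\S(\rho||\sigma),$$
which is the assertion.

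The real content, and the step I expect to be the main obstacle, is the monotonicity of relative entropy under the partial trace; this is equivalent to the strong subadditivity of the von Neumann entropy, and one would establish it — as in Lindblad's original treatment \cite{Lindblad} — from Lieb's concavity theorem via the joint convexity of $(\rho,\sigma)\mapsto\S(\rho||\sigma)$, or else simply invoke it as a known result, since here it serves as an input lemma rather than something to be reproved. Everything else is the routine bookkeeping indicated above; the one point to handle carefully is the behaviour of $\log$ under the non-surjective isometry $V$, which is why one works on the support of $V\tau V^{\dagger}$ throughout.
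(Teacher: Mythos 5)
The paper offers no proof of this lemma at all --- it is stated with a bare citation to Lindblad's 1975 paper --- so there is nothing internal to compare your argument against; what you have written is the standard and correct route to the statement. Your reduction is sound in every step I can check: the disposal of the case $\mathrm{supp}(\rho)\not\subseteq\mathrm{supp}(\sigma)$, the construction of the Stinespring isometry $V=\sum_j M_j\ot\ket{j}$ from a trace-preserving Kraus decomposition, the unitary-invariance-style identity $\S(V\rho V^{\dagger}||V\sigma V^{\dagger})=\S(\rho||\sigma)$ (correctly handled on the support of $V\sigma V^{\dagger}$), and the final appeal to monotonicity under the partial trace. The one thing to be honest with yourself about is that this reduction does not prove the theorem so much as relocate it: the partial trace is itself a stochastic quantum operation, so ``monotonicity under partial trace'' is a special case of the very lemma being proved, and all of the analytic difficulty (Lieb's concavity theorem, or equivalently the joint convexity of relative entropy / strong subadditivity) lives in that special case. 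You say this explicitly and propose to invoke it as a known input, which is a legitimate stance --- and is in effect exactly what the paper does by citing \cite{Lindblad}, since Lindblad's contribution was precisely to derive the general channel case from Lieb's result. So: correct, standard, and consistent with the paper's (non-)treatment; just be aware that if the partial-trace case were not granted, your argument would be circular, and the genuine content of the lemma would remain unproved.
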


\section{The Main Results and Proofs}

In general, we do not know whether a quantum operation will increase
or decrease the von Neumann entropy. However, if it is
bi-stochastic, then it does not decrease the entropy. That is, if
$\Phi$ is a bi-stochastic quantum operation, then $\S(\Phi(\rho))
\geqslant \S(\rho)$ for any quantum state $\rho$. This follows
immediately from Lemma~\ref{Uhlm} by letting $\sigma = \frac 1N
\idsup{\cH}$.

Our main results are:

\begin{thrm}\label{th:mainresult}
Let $\rho$ be a quantum state on $\cH$, and $\Phi$ be a
bi-stochastic quantum operation. Then the following statements are
equivalent:
\begin{enumerate}[(i)]
\item $\S(\Phi(\rho)) = \S(\rho)$.

\item $\Phi^\dagger \circ \Phi (\rho) = \rho$.

\item The space $\cH$ can be decomposed into:
$$\cH = \bigoplus_{k=1}^{K} \cH^L_k \ot \cH^R_k.$$
The quantum state $\rho$ can be decomposed into:
$$\rho=\bigoplus_{k=1}^{K} p_k \rho^L_k \ot \frac{1}{d^R_k} \idsup{\cH^R_k},$$
where $p = [p_1,\ldots,p_K]^\trans$ is a probability vector, that
is, all components of $p$ are non-negative and their sum is one, $d^R_k = \dim\cH^R_k$, $\rho^L_k$ is a quantum state on $\cH^L_k$, $k = 1,2,\ldots,K$. \\~\\
The bi-stochastic quantum operation $\Phi$ can be decomposed into:
$$\Phi = \bigoplus_{k=1}^{K} \Phi_k = \bigoplus_{k=1}^{K}\mathrm{Ad}_{U_k} \ot \Phi^R_k,$$
where $\Phi_k$ is the restriction of $\Phi$ to $\lin{\cH^L_k \ot
\cH^R_k}$, $\Phi_k = \mathrm{Ad}_{U_k} \ot \Phi^R_k$, $U_k \in
\lin{\cH^L_k}$ is a unitary operator and $\Phi^R_k \in \T(\cH^R_k)$
is a bi-stochastic and completely positive super-operator, $k=1,
\ldots, K$.
\end{enumerate}
\end{thrm}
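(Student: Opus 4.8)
The plan is to establish the cycle of implications $(i)\Rightarrow(ii)\Rightarrow(iii)\Rightarrow(i)$, with the bulk of the work in the step $(ii)\Rightarrow(iii)$.

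For $(i)\Rightarrow(ii)$ I would run the entropy inequality of Lemma~\ref{Uhlm} with $\sigma=\frac1N\idsup{\cH}$, so that $\S(\Phi(\rho)\|\tfrac1N\idsup{\cH})\le\S(\rho\|\tfrac1N\idsup{\cH})$; since $\Phi$ is bi-stochastic, $\Phi(\sigma)=\sigma$, and the relative-entropy quantity against the maximally mixed state is exactly $\log_2 N-\S(\cdot)$. Hence the hypothesis $\S(\Phi(\rho))=\S(\rho)$ forces equality in Lemma~\ref{Uhlm}, and then Lemma~\ref{Hiai} with $\sigma=\frac1N\idsup{\cH}$ applies: there $\Phi(\sigma)=\sigma=\frac1N\idsup{\cH}$, so $\mathrm{Ad}_{\sigma^{1/2}}$ and $\mathrm{Ad}_{\Phi(\sigma)^{-1/2}}$ are each just multiplication by the scalars $N^{\mp1/2}$ which cancel, giving $\Phi^\dagger_\sigma=\Phi^\dagger$. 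Thus the equality condition in Lemma~\ref{Hiai} reads precisely $\Phi^\dagger\circ\Phi(\rho)=\rho$, which is $(ii)$.

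For $(iii)\Rightarrow(i)$ I would simply compute both entropies from the block form. On each block $\Phi_k=\mathrm{Ad}_{U_k}\ot\Phi^R_k$ acts on $p_k\,\rho^L_k\ot\frac1{d^R_k}\idsup{\cH^R_k}$; the unitary part leaves $\S(\rho^L_k)$ unchanged, and $\Phi^R_k$ is bi-stochastic so it fixes $\frac1{d^R_k}\idsup{\cH^R_k}$, hence it fixes its entropy $\log_2 d^R_k$. Since von Neumann entropy is additive over tensor factors and, for a direct sum of subnormalised pieces with weights $p_k$, decomposes as $\sum_k p_k\S(\text{block})+H(p)$ with $H$ the Shannon entropy, both $\S(\Phi(\rho))$ and $\S(\rho)$ equal $H(p)+\sum_k p_k\bigl(\S(\rho^L_k)+\log_2 d^R_k\bigr)$, giving $(i)$.

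The substantive step is $(ii)\Rightarrow(iii)$, and here the hard part is extracting the tensor/direct-sum structure of $\cH$, $\rho$, and $\Phi$ from the single operator identity $\Phi^\dagger\circ\Phi(\rho)=\rho$. My approach would be to fix a Kraus form $\Phi=\sum_j\mathrm{Ad}_{M_j}$ with $\sum_j M_j^\dagger M_j=\sum_j M_jM_j^\dagger=\idsup{\cH}$ (bi-stochasticity), so $\Phi^\dagger=\sum_j\mathrm{Ad}_{M_j^\dagger}$, and interpret $\Phi^\dagger\circ\Phi(\rho)=\rho$ as saying that $\rho$ is a fixed point of the bi-stochastic channel $\Phi^\dagger\circ\Phi$; one then invokes the standard structure theory of the fixed-point algebra of a unital channel (the multiplicative/Lindblad--Kümmerer decomposition), which shows the fixed-point set is a $\ast$-subalgebra on which the channel acts as a conditional expectation, and a $\ast$-algebra on a finite-dimensional Hilbert space is unitarily a direct sum of full matrix algebras $\lin{\cH^L_k}\ot\idsup{\cH^R_k}$ — this is exactly the claimed decomposition of $\cH$. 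One shows $\rho$ lies in (the commutant structure attached to) this algebra, forcing $\rho=\bigoplus_k p_k\,\rho^L_k\ot\frac1{d^R_k}\idsup{\cH^R_k}$, and that $\Phi$ itself must preserve each block and act there as $\mathrm{Ad}_{U_k}\ot\Phi^R_k$; the unitarity on the left tensor factor comes from the requirement that $\Phi$ be entropy-preserving (hence injective as a $\ast$-homomorphism) on that factor, while $\Phi^R_k$ inherits bi-stochasticity from $\Phi$. Making the passage from "$\rho$ is \emph{a} fixed point" to "the whole global structure is block-diagonal adapted to $\mathrm{supp}(\rho)$" precise — i.e.\ controlling what $\Phi$ does outside $\mathrm{supp}(\rho)$ and checking that the decomposition can be taken simultaneously for $\rho$ and $\Phi$ — is the delicate bookkeeping I expect to be the main obstacle.
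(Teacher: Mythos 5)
Your outer implications are fine and match the paper where they overlap. The step $(i)\Rightarrow(ii)$ is exactly the paper's argument: put $\sigma=\frac1N\idsup{\cH}$ in Lemma~\ref{Hiai}, use $\Phi(\sigma)=\sigma$ and $\S(\tau\,\|\,\frac1N\idsup{\cH})=\log_2N-\S(\tau)$ (the detour through Lemma~\ref{Uhlm} is harmless but unnecessary, since Lemma~\ref{Hiai} is already an equivalence). Your $(iii)\Rightarrow(i)$ is a correct direct entropy computation and is a legitimate substitute for the paper's closing implication $(iii)\Rightarrow(ii)$, which is an equally short computation with $\Phi^\dagger\circ\Phi=\bigoplus_k\idsup{\lin{\cH^L_k}}\ot(\Phi^R_k)^\dagger\circ\Phi^R_k$.

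The genuine gap is in $(ii)\Rightarrow(iii)$, and it sits exactly where you defer to ``delicate bookkeeping'': the decomposition of $\Phi$ itself. The generic fixed-point-algebra theorem for unital channels gives the decomposition of $\cH$, the identification of $\mathbf{Fix}(\Phi^\dagger\circ\Phi)$ with $\bigoplus_k\lin{\cH^L_k}\ot\idsup{\cH^R_k}$, and hence the stated form of $\rho$ (which lies \emph{in} that algebra, not in some commutant attached to it); but it says nothing about how $\Phi$, as opposed to $\Phi^\dagger\circ\Phi$, acts. Your sketch for that last piece does not go through: ``entropy-preserving hence injective as a $\ast$-homomorphism'' is not an argument for unitarity of the left tensor factor, and you have not even established that $\Phi$ is a $\ast$-homomorphism on the fixed-point algebra (that requires a multiplicative-domain argument: $\Phi$ is a Hilbert--Schmidt isometry on $\mathbf{Fix}(\Phi^\dagger\circ\Phi)$, so Kadison--Schwarz plus trace preservation forces $\Phi(X^\dagger X)=\Phi(X)^\dagger\Phi(X)$ there). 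More seriously, your assertion that $\Phi$ ``must preserve each block'' is unjustified and in fact fails without modification: the bi-stochastic channel on $\Complex^2$ with Kraus operators $\out{2}{1}$ and $\out{1}{2}$ satisfies $\Phi^\dagger\circ\Phi(\rho)=\rho$ for every diagonal $\rho$, yet it maps the block $\lin{\cH^L_1\ot\cH^R_1}=\Complex\tinyspace\out{1}{1}$ onto $\Complex\tinyspace\out{2}{2}$, i.e.\ it permutes the blocks of its fixed-point algebra. So the block-preserving form of $(iii)$ needs an argument (or a weakening allowing a permutation of isomorphic blocks) that your proposal does not supply. The paper obtains the entire package --- the decomposition of $\cH$, of the fixed-point set, \emph{and} the action $\Phi(X^L_k\ot\omega^R_k)=U_kX^L_kU_k^\dagger\ot\tilde\omega^R_k$ --- by citing Lemma 3.11 of the Hiai--Mosonyi--Petz--B\'eny paper; if you replace that citation by the generic fixed-point theorem, the action of $\Phi$ is the part you must prove yourself, and it is the part your proposal leaves open.
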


\begin{proof}
$(i)\Longleftrightarrow(ii).$ Let $\sigma = \frac1N \idsup{\cH}$ in
Lemma \ref{Hiai}. Then $\mathrm{supp}(\rho) \subseteq
\mathrm{supp}(\sigma)$ and $\Phi^\dagger_\sigma = \Phi^\dagger$.
Note that $\Phi$ satisfies the conditions in Lemma~\ref{Hiai} and
$\Phi\Pa{\frac1N\idsup{\cH}} = \frac1N\idsup{\cH}$, so
$$
\S\Pa{\Phi(\rho)\left|\right|\frac1N\idsup{\cH}} =
\S\Pa{\rho\left|\right|\frac1N\idsup{\cH}}
$$
if and only if
$\Phi^\dagger \circ
\Phi(\rho)= \rho$, which implies that $\S(\Phi(\rho)) = \S(\rho)$ if
and only if $\Phi^\dagger\circ \Phi(\rho) = \rho$.

$(ii)\Longrightarrow(iii).$ Let $\sigma = \frac1N \idsup{\cH}$
again. Since $\Phi$ is bi-stochastic, we have $\mathrm{supp}(\sigma)
= \mathrm{supp}(\Phi(\sigma)) = \cH$ and $\Phi^\dagger_\sigma =
\Phi^\dagger$. Denote $$ \mathbf{Fix}(\Phi^\dagger\circ\Phi)=
\set{X\in\lin{\cH}: \Phi^\dagger\circ\Phi(X) = X}.
$$
It follows from Lemma 3.11 in
\cite{Hiai} that the space $\cH$ has a decomposition
$$
\cH =
\bigoplus_{k=1}^K \cH^L_k \ot \cH^R_k,
$$
such that
$$
\mathbf{Fix}(\Phi^\dagger\circ\Phi) =
\mathbf{Fix}(\Phi^\dagger_\sigma\circ\Phi) = \bigoplus_{k=1}^K
\lin{\cH^L_k}^+\ot\omega^R_k,
$$
and
$$
\Phi(X^L_k \ot \omega^R_k) = U_k X^L_k U^\dagger_k \ot
\tilde{\omega}^R_k,\, X^L_k \in \lin{\cH^L_k},
$$
where $U_k \in\lin{\cH^L_k}$ is a unitary operator, $\omega^R_k$ and
$\tilde{\omega}^R_k$ are two invertible quantum states on $\cH^R_k$,
$\lin{\cH^L_k}^+$ is the set of all positive semi-definite operators
in $\lin{\cH^L_k}$, $k = 1,\ldots,K$.

Note that $\Phi$ is a bi-stochastic quantum operation, so there is a
bi-stochastic quantum operation $\Phi^R_k$ on $\lin{\cH^R_k}$ such
that $\Phi^R_k( \omega^R_k)=\tilde{\omega}^R_k$. It follows from
$\idsup{\cH}\in \mathbf{Fix}(\Phi^\dagger\circ\Phi)$ that
$\omega^R_k = \tilde{\omega}^R_k = \frac{1}{d^R_k}\idsup{\cH^R_k}$,
where $d^R_k = \dim\cH^R_k$. The quantum state $\rho \in
\mathbf{Fix}(\Phi^\dagger\circ\Phi)$ implies that
$$\rho=\bigoplus_{k=1}^{K} p_k \rho^L_k \ot \frac{1}{d^R_k} \idsup{\cH^R_k},$$
where $p = [p_1,\ldots,p_K]^\trans$ is a probability vector,
$\rho^L_k$ is a quantum state on $\cH^L_k$, $k = 1,2,\ldots,K$, and
the decomposition of $\Phi$ is obtained immediately.

$(iii)\Longrightarrow(ii).$ Note that $\Phi^\dagger\circ\Phi =
\bigoplus_{k=1}^K \idsup{\lin{\cH^L_k}} \ot
(\Phi^R_k)^\dagger\circ\Phi^R_k$ and $\Phi^R_k$ is bi-stochastic, we
have $\Phi^R_k(\idsup{\cH^R_k}) = \idsup{\cH^R_k}$ and
$(\Phi^R_k)^\dagger(\idsup{\cH^R_k}) = \idsup{\cH^R_k}$. Thus,
$\Phi^\dagger\circ\Phi(\rho) = \rho$.
\end{proof}

\begin{remark}
We remark here that the completely positivity of $\Phi$ in
Theorem~\ref{th:mainresult} can be relaxed to be 2-positivity since
Lemmas~\ref{Hiai} and \ref{Uhlm} hold in that case.
\end{remark}

\begin{thrm}
Let $\Phi$ and $\Psi$ be two quantum operations on $\cH$, $\Phi$ be
bi-stochastic and $\Psi$ be stochastic. Then
$\S^{\mathrm{map}}(\Phi\circ\Psi) = \S^{\mathrm{map}}(\Psi)$ if and
only if $\Phi^\dagger\circ\Phi\circ\Psi = \Psi$.
\end{thrm}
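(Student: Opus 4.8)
The plan is to transfer the statement to the doubled Hilbert space $\cH\ot\cH$ and then invoke Theorem~\ref{th:mainresult}. The crucial ingredient is the intertwining property of the Choi--Jamio{\l}kowski isomorphism: for any $\Theta_1,\Theta_2\in\T(\cH)$,
$$
\jam{\Theta_1\circ\Theta_2} = \Pa{(\Theta_1\circ\Theta_2)\ot\idsup{\lin{\cH}}}\Pa{\out{\Omega}{\Omega}} = \Pa{\Theta_1\ot\idsup{\lin{\cH}}}\Pa{\jam{\Theta_2}}.
$$
In particular, setting $\widetilde\Phi \defeq \Phi\ot\idsup{\lin{\cH}}\in\T(\cH\ot\cH)$, one obtains $\frac1N\jam{\Phi\circ\Psi} = \widetilde\Phi\Pa{\frac1N\jam{\Psi}}$.

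First I would dispose of the bookkeeping. Since $\Psi$ is stochastic, $\frac1N\jam{\Psi}$ is a bipartite state on $\cH\ot\cH$; since $\Phi$ is bi-stochastic, $\Phi\circ\Psi$ is stochastic, so $\frac1N\jam{\Phi\circ\Psi}$ is also a bipartite state and $\S^{\mathrm{map}}(\Phi\circ\Psi)$ is well-defined. Next, $\widetilde\Phi$ is completely positive (a tensor product of completely positive maps) and bi-stochastic, because $\Phi$ is trace-preserving and unit-preserving; its adjoint is $\widetilde\Phi^\dagger = \Phi^\dagger\ot\idsup{\lin{\cH}}$, so $\widetilde\Phi^\dagger\circ\widetilde\Phi = (\Phi^\dagger\circ\Phi)\ot\idsup{\lin{\cH}}$. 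Applying the intertwining identity once more (with $\Theta_1 = \Phi^\dagger\circ\Phi$, $\Theta_2 = \Psi$) gives $\widetilde\Phi^\dagger\circ\widetilde\Phi\Pa{\frac1N\jam{\Psi}} = \frac1N\jam{(\Phi^\dagger\circ\Phi)\circ\Psi}$.

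Now I apply the equivalence $(i)\Longleftrightarrow(ii)$ of Theorem~\ref{th:mainresult}, with ambient space $\cH\ot\cH$, bi-stochastic operation $\widetilde\Phi$, and state $\frac1N\jam{\Psi}$:
$$
\S\Pa{\widetilde\Phi\Pa{\tfrac1N\jam{\Psi}}} = \S\Pa{\tfrac1N\jam{\Psi}} \quad\Longleftrightarrow\quad \widetilde\Phi^\dagger\circ\widetilde\Phi\Pa{\tfrac1N\jam{\Psi}} = \tfrac1N\jam{\Psi}.
$$
By the previous paragraph the left-hand side is exactly $\S^{\mathrm{map}}(\Phi\circ\Psi) = \S^{\mathrm{map}}(\Psi)$, while the right-hand side, after cancelling the factor $\frac1N$, reads $\jam{(\Phi^\dagger\circ\Phi)\circ\Psi} = \jam{\Psi}$. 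Since $\jam{\cdot}$ is an isomorphism, hence injective, this is equivalent to $\Phi^\dagger\circ\Phi\circ\Psi = \Psi$, which is the assertion.

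I do not expect a genuine obstacle here: the whole argument rests on the intertwining identity for $\jam{\cdot}$ and on Theorem~\ref{th:mainresult}. The only two points that require a line of verification are that $\widetilde\Phi = \Phi\ot\idsup{\lin{\cH}}$ inherits complete positivity and bi-stochasticity from $\Phi$, and that $\frac1N\jam{\Psi}$ and $\frac1N\jam{\Phi\circ\Psi}$ are genuine states so that Theorem~\ref{th:mainresult} and the definition of $\S^{\mathrm{map}}$ legitimately apply --- both of which follow at once from $\Phi$ being bi-stochastic and $\Psi$ being stochastic.
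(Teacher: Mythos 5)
Your proposal is correct and follows essentially the same route as the paper: both reduce the statement to Theorem~\ref{th:mainresult} applied to the bi-stochastic map $\Phi\ot\idsup{\lin{\cH}}$ acting on the state $\frac1N\jam{\Psi}$, and then translate the fixed-point condition back through the Choi--Jamio{\l}kowski isomorphism. The only cosmetic difference is that you invoke injectivity of $\jam{\cdot}$ abstractly where the paper verifies it by expanding $\jam{\Psi}=\sum_{i,j}\Psi(\out{i}{j})\ot\out{i}{j}$ in a basis; your explicit check that $\Phi\ot\idsup{\lin{\cH}}$ is completely positive and bi-stochastic is a welcome addition the paper leaves implicit.
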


\begin{proof}
It follows from $\S^{\mathrm{map}}(\Phi\circ\Psi) =
\S^{\mathrm{map}}(\Psi)$ that
$\S(\Phi\ot\idsup{\lin{\cH}}(\frac1N\jam{\Psi})) =
\S(\frac1N\jam{\Psi})$. Then, by Theorem \ref{th:mainresult},
$\S(\Phi\ot\idsup{\lin{\cH}}(\frac1N\jam{\Psi})) =
\S(\frac1N\jam{\Psi})$ holds if and only if $\Phi^\dagger\circ\Phi
\ot \idsup{\lin{\cH}}(\jam{\Psi}) = \jam{\Psi}$, that is,
$$\Phi^\dagger\circ\Phi\circ\Psi \ot \idsup{\lin{\cH}}(\out{\Omega}{\Omega}) = \Psi\ot\idsup{\lin{\cH}}(\out{\Omega}{\Omega}).$$
Note that $\jam{\Psi} = \sum_{i,j=1}^N\Psi(\out{i}{j}) \ot
\out{i}{j}$, so we have
$$\sum_{i,j=1}^N\Phi^\dagger\circ\Phi\circ\Psi(\out{i}{j}) \ot \out{i}{j} = \sum_{i,j=1}^N\Psi(\out{i}{j}) \ot \out{i}{j},$$
which implies that $\Phi^\dagger\circ\Phi\circ\Psi(\out{i}{j}) =
\Psi(\out{i}{j})$ for all $i,j$, this shows that
$\Phi^\dagger\circ\Phi\circ\Psi = \Psi$.
\end{proof}

\section{An Application}

If $p = [p_{1},\ldots,p_{N}]^{\trans} \in \Real^{N}$ and $q =
[q_{1},\ldots,q_{N}]^{\trans} \in \Real^{N}$ are two probability
vectors, the \emph{Shannon entropy} of $p$ and the \emph{relative
entropy} of $p$ and $q$ are defined by $\H(p) = - \sum_{i=1}^{N}
p_{i}\log_{2}p_{i}$ and $\H(p||q) = \sum_i p_i(\log_{2}p_i -
\log_{2}q_i)$, respectively, where $0\log_2 0=0$
\cite{Ohya,Shannon}).

Let $B = [b_{ij}]$ be an $N \times N$ matrix, if $b_{ij} \geqslant
0$, and $\sum_{i=1}^{N} b_{ij} = 1$ for every $j=1, \ldots, N$, then
$B = [b_{ij}]$ is said to be \emph{stochastic}; if $B = [b_{ij}]$ is
stochastic and  $\sum_{j=1}^{N} b_{ij} = 1$ for every $i=1, \ldots,
N$, then $B$ is said to be \emph{bi-stochastic}. Let $B$ be a
bi-stochastic $N\times N$ matrix, $p$ be an $N$-dimensional
probability vector. Then $Bp$ is also an $N$-dimensional probability
vector. In \cite{Alan}, A. Poritz and J. Poritz showed that $B$
preserves the Shannon entropy of $p$, that is, $\H(Bp)=\H(p)$, if and
only if $B^{\trans} B p = p$.

Now, we apply Theorem~\ref{th:mainresult} to prove the conclusion again. Firstly, we
need the following notion:

Let $\Phi$ be a stochastic quantum operation and
$\Phi=\sum_{\mu}\mathrm{Ad}_{M_{\mu}}$ be its Kraus representation.
Define the \emph{Kraus matrix} $B(\Phi)$ of $\Phi$ by
$B(\Phi)=\sum_{\mu}M_{\mu}\bullet \overline{M_{\mu}}$, where
$\bullet$ denotes the Shur product of matrices, that is, the
entrywise product of two matrices, and $\overline{M_{\mu}}$ is the
complex conjugate of $M_{\mu}$. It is easy to show that $B(\Phi)$ is
a stochastic matrix if $\Phi$ is a stochastic quantum operation on
$\cH$, and $B(\Phi)$ is a bi-stochastic matrix if $\Phi$ is a
bi-stochastic quantum operation on $\cH$. Moreover, $B(\Phi^\dagger)
= B(\Phi)^\trans$ \cite{Bengtsson}.

\begin{lem}\label{lem:bridge} Let $\set{\ket{i}: i = 1,\ldots, N}$ be an orthonormal basis
for $\cH$, $\Phi$ be a bi-stochastic quantum operation, $\rho$ be a
quantum state on $\cH$, $\sigma = \Phi(\rho)$, $p_j =
\Innerm{j}{\rho}{j}, q_j = \Innerm{j}{\sigma}{j}, j = 1, \ldots, N$,
$p = [p_1, \ldots, p_N]^\trans$, $q = [q_1, \ldots, q_N]^\trans$,
where all $p_{j}$ are the eigenvalues of $\rho$. Then $p$ and $q$
are two $N$-dimensional probability vectors and $q=B(\Phi)p$.
Conversely, if $p,q$ are two $N$-dimensional probability vectors and
$T$ is a bi-stochastic $N \times N$ matrix such that $q = Tp$, then
there exists a bi-stochastic quantum operation $\Phi$ on $\cH$ such
that $T = B(\Phi)$ and $\sigma = \Phi(\rho)$.
\end{lem}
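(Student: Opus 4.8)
The plan is to peel off at the very start the structural consequence of the hypothesis that the numbers $p_j = \Innerm j \rho j$ are \emph{exactly} the eigenvalues of $\rho$ (counted with multiplicity). Since $(p_1,\dots,p_N)$ is then simultaneously the list of diagonal entries of $\rho$ in the basis $\set{\ket i}$ and the spectrum of $\rho$, comparing purities gives
\[
\sum_{j,k=1}^{N}\abs{\Innerm j \rho k}^{2} = \tr{\rho^{2}} = \sum_{k}p_{k}^{2} = \sum_{j=1}^{N}\abs{\Innerm j \rho j}^{2},
\]
so every off-diagonal entry of $\rho$ vanishes and $\rho = \sum_{j=1}^{N} p_j\out j j$. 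That $p$ is a probability vector is then immediate from $p_j = \Innerm j \rho j \geqslant 0$ and $\sum_j p_j = \tr\rho = 1$; and $q$ is a probability vector because $\sigma = \Phi(\rho)$ is again a state, $\Phi$ being stochastic.

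For the forward implication I would fix any Kraus representation $\Phi = \sum_{\mu}\mathrm{Ad}_{M_\mu}$ and use the diagonal form of $\rho$ to compute
\[
q_i = \Innerm i {\Phi(\rho)} i = \sum_{\mu}\sum_{j} p_j\,\abs{\Innerm i {M_\mu} j}^{2} = \sum_{j=1}^{N}\Pa{\sum_{\mu}\abs{(M_\mu)_{ij}}^{2}}p_j .
\]
Because the $(i,j)$-entry of $B(\Phi) = \sum_\mu M_\mu\bullet\overline{M_\mu}$ is precisely $\sum_\mu (M_\mu)_{ij}\overline{(M_\mu)_{ij}} = \sum_\mu\abs{(M_\mu)_{ij}}^{2}$, this is exactly $q_i = \sum_j (B(\Phi))_{ij}\,p_j$, i.e.\ $q = B(\Phi)p$. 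The auxiliary facts recorded above — that $B(\Phi)$ is bi-stochastic when $\Phi$ is, and $B(\Phi^\dagger) = B(\Phi)^{\trans}$ — I would simply quote.

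For the converse, given probability vectors $p,q$ and a bi-stochastic matrix $T = [t_{ij}]$ with $q = Tp$, I would put $\rho = \sum_j p_j\out j j$, $\sigma = \sum_j q_j\out j j$, and realize $T$ by the \emph{measure-and-prepare} channel that measures in the basis $\set{\ket i}$ and then applies $T$ to the outcome distribution:
\[
\Phi(X) = \sum_{i,j=1}^{N} t_{ij}\,\Innerm j X j\,\out i i ,\qquad M_{ij} = \sqrt{t_{ij}}\;\out i j .
\]
Then $\Phi = \sum_{i,j}\mathrm{Ad}_{M_{ij}}$ is completely positive by inspection; $\sum_{i,j}M_{ij}^{\dagger}M_{ij} = \sum_j\bigl(\sum_i t_{ij}\bigr)\out j j = \idsup{\cH}$ by the column sums of $T$, so $\Phi$ is stochastic; $\sum_{i,j}M_{ij}M_{ij}^{\dagger} = \sum_i\bigl(\sum_j t_{ij}\bigr)\out i i = \idsup{\cH}$ by the row sums, so $\Phi$ is bi-stochastic. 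From $(M_{ij})_{kl} = \sqrt{t_{ij}}\,\delta_{ki}\delta_{lj}$ and the computation of the previous paragraph, $(B(\Phi))_{kl} = \sum_{i,j}t_{ij}\delta_{ki}\delta_{lj} = t_{kl}$, i.e.\ $B(\Phi) = T$; and $\Phi(\rho) = \sum_i\bigl(\sum_j t_{ij}p_j\bigr)\out i i = \sum_i (Tp)_i\out i i = \sigma$, as required.

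The one place where care is needed is the opening step: the hypothesis that the $p_j$ are the eigenvalues of $\rho$ is not a harmless normalization but precisely what forces $\rho$ to be diagonal in $\set{\ket i}$, and hence what makes $p\mapsto q$ the linear map given by the Shur-product matrix $B(\Phi)$; dropped, the map need not even be linear. The remaining work — shuffling Kraus entries in the forward direction, and verifying the four defining properties of the explicit channel in the converse — is routine bookkeeping.
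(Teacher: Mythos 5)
Your proof is correct and follows essentially the same route as the paper's: the forward direction is the identical Kraus-representation computation, and your measure-and-prepare channel with Kraus operators $\sqrt{t_{ij}}\,\out{i}{j}$ is exactly the map the paper defines by $\Phi_T(\out{i}{i}) = \sum_j t_{ij}\out{j}{j}$ (modulo the paper's transpose convention for $T$). You additionally supply two details the paper leaves implicit --- the purity comparison showing that the eigenvalue hypothesis forces $\rho$ to be diagonal in the given basis (without which the forward computation's insertion of $\sum_j \out{j}{j}$ would not collapse to the diagonal of $\rho$), and the explicit verification that the constructed channel is bi-stochastic with $B(\Phi)=T$ --- both of which the paper's write-up asserts without proof.
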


\begin{proof} If $\Phi = \sum_\mu
\mathrm{Ad}_{M_\mu}$ is a Kraus representation of $\Phi$, then
\begin{eqnarray*}
q_i & = & \Innerm{i}{\Phi(\rho)}{i} = \sum_\mu \Innerm{i}{M_\mu \rho M^\dagger_\mu}{i} =  \sum_\mu \sum_{j=1}^N \Innerm{i}{M_\mu}{j} \Innerm{j}{\rho}{j} \Innerm{j}{M^\dagger_\mu}{i}\\
& = & \sum_{j=1}^N \sum_\mu
\Innerm{i}{M_\mu}{j}\overline{\Innerm{i}{M_\mu}{j}} p_j =
\sum_{j=1}^N \Innerm{i}{B(\Phi)}{j}p_j.
\end{eqnarray*}
That is $q=B(\Phi)p$. Conversely, let $T^\trans = [t_{ij}]$. Then we
can construct a stochastic quantum operation $\Phi_T$ on $\cH$ such
that $T = B(\Phi_T)$. In fact, $\Phi_T(\out{i}{i}) = \sum_j
t_{ij}\out{j}{j}$ defines a stochastic quantum operation, and for $\rho = \sum_i p_i\out{i}{i}, \sigma = \sum_j
(Tp)_j\out{j}{j}$, they satisfy
that
\begin{eqnarray*}
\Phi_T(\rho) & = & \sum_i p_i\Phi_T(\out{i}{i}) = \sum_i p_i\sum_j t_{ij}\out{j}{j} \\
& = & \sum_j (\sum_i p_i t_{ij})\out{j}{j} = \sum_j (Tp)_j\out{j}{j}
= \sigma.
\end{eqnarray*}
\end{proof}

\begin{cor}
Let $p$ be an $N$-dimensional probability vector and $B$ be an $N
\times N$ bi-stochastic matrix. Then  $\H(Bp) = \H(p)$ if and only
if $B^{\trans} B p = p$.
\end{cor}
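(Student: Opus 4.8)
The plan is to reduce the Poritz--Poritz classical statement to the quantum Theorem~\ref{th:mainresult} via the bridge established in Lemma~\ref{lem:bridge}. Starting from a bi-stochastic matrix $B$ and a probability vector $p$, I would first realize $p$ as the diagonal of a quantum state: fix the orthonormal basis $\set{\ket{i}}$ and set $\rho = \sum_i p_i\out{i}{i}$. Applying the converse direction of Lemma~\ref{lem:bridge} with $T = B$ (so $T^{\trans} = B^{\trans}$), there is a bi-stochastic quantum operation $\Phi$ on $\cH$ with $B = B(\Phi)$ and $\sigma \defeq \Phi(\rho)$ satisfying $\Innerm{j}{\sigma}{j} = (Bp)_j$ for all $j$. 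In fact the explicit construction in the lemma gives $\Phi(\out{i}{i}) = \sum_j b_{ji}\out{j}{j}$, so $\Phi$ maps diagonal operators to diagonal operators and $\sigma = \sum_j (Bp)_j\out{j}{j}$ is itself diagonal.

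The key observation is that because $\rho$ and $\sigma$ are both diagonal in this basis, $\S(\rho) = \H(p)$ and $\S(\sigma) = \S(\Phi(\rho)) = \H(Bp)$. Hence $\H(Bp) = \H(p)$ is equivalent to $\S(\Phi(\rho)) = \S(\rho)$, which by the equivalence $(i)\Longleftrightarrow(ii)$ of Theorem~\ref{th:mainresult} is equivalent to $\Phi^{\dagger}\circ\Phi(\rho) = \rho$. It remains to translate this operator identity back into the matrix identity $B^{\trans}Bp = p$. For this I would compute the diagonal of $\Phi^{\dagger}\circ\Phi(\rho)$: since $\Phi$ preserves diagonal operators, so does $\Phi^{\dagger}$ (its Kraus matrix is $B(\Phi^{\dagger}) = B^{\trans}$), and composing the two diagonal-to-diagonal channels multiplies the associated stochastic matrices. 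Concretely, $\Phi^{\dagger}\circ\Phi(\out{i}{i}) = \sum_j (B^{\trans}B)_{ji}\out{j}{j}$, so $\Phi^{\dagger}\circ\Phi(\rho) = \sum_j (B^{\trans}Bp)_j\out{j}{j}$. Comparing with $\rho = \sum_j p_j\out{j}{j}$ gives $\Phi^{\dagger}\circ\Phi(\rho) = \rho$ if and only if $B^{\trans}Bp = p$, completing the chain of equivalences.

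The one point that needs a little care — and the closest thing to an obstacle — is the logical direction in the two implications of the chain that are not a priori "iff". The statement $\Phi^{\dagger}\circ\Phi(\rho)=\rho \iff B^{\trans}Bp = p$ is genuinely biconditional for this particular diagonal-preserving $\Phi$, since $\rho$ and $\Phi^{\dagger}\circ\Phi(\rho)$ are both supported on the diagonal and the diagonal entries are exactly $p$ and $B^{\trans}Bp$; so equality of the operators is equivalent to equality of these vectors. Thus every link in $\H(Bp)=\H(p) \iff \S(\Phi(\rho))=\S(\rho) \iff \Phi^{\dagger}\circ\Phi(\rho)=\rho \iff B^{\trans}Bp=p$ is an equivalence, and the corollary follows. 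I would also remark that the construction uses only the existence claim in Lemma~\ref{lem:bridge}, not any uniqueness, so no additional hypothesis on $B$ is needed beyond bi-stochasticity.
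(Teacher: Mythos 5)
Your proof is correct and follows essentially the same route as the paper: realize $p$ and $Bp$ as diagonal states in a fixed basis, invoke Lemma~\ref{lem:bridge} to produce a bi-stochastic $\Phi$ with $B=B(\Phi)$ and $\Phi(\rho)=\sigma$, and apply the equivalence $(i)\Leftrightarrow(ii)$ of Theorem~\ref{th:mainresult}. The only (minor) difference is that you obtain the last link by computing $\Phi^\dagger\circ\Phi$ directly on diagonal operators, which makes every step an explicit biconditional and so covers both directions at once, whereas the paper applies Lemma~\ref{lem:bridge} a second time to $\Phi^\dagger$ for the forward implication and dismisses the converse as clear.
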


\begin{proof}$\Longleftarrow$ is clear.

$\Longrightarrow$. Let $p = [p_1,\ldots,p_N]^\trans, q = Bp =
[q_1,\ldots,q_N]^\trans$, $\set{\ket{i}: i = 1,\ldots, N}$ be an
orthonormal basis for $\cH$. Define
$$\rho = \sum_{j=1}^{N}p_j\out{j}{j},\quad \sigma = \sum_{j=1}^{N}q_j\out{j}{j}.$$
Since $B$ is a bi-stochastic matrix, it follows from Lemma~\ref{lem:bridge} that there exists a bi-stochastic quantum operation
on $\cH$ such that $\sigma = \Phi(\rho)$, $B = B(\Phi)$. That
$\S(\Phi(\rho)) = \H(Bp)$ and $\S(\rho) = \H(p)$ is clear. Note that
$\H(Bp) = \H(p)$ implies that $\S(\Phi(\rho)) = \S(\rho)$. By
Theorem~\ref{th:mainresult} that $\S(\Phi(\rho)) = \S(\rho)$ if and
only if $\Phi^\dagger \circ \Phi(\rho) = \rho$, combining this fact
with $\sigma = \Phi(\rho)$, we have $\Phi^\dagger(\sigma) = \rho$.
It follows from Lemma~\ref{lem:bridge} again that $q = B(\Phi)p$, $p
= B(\Phi^\dagger)q = B(\Phi)^\trans q$. This shows that $p =
B(\Phi)^\trans B(\Phi) p$, that is, $p = B^\trans Bp$ since $B =
B(\Phi)$.
\end{proof}

\subsection*{Acknowledgement} The authors wish to express their
thanks to the referees for their valuable comments and suggestions.
This project is supported by Natural Science Foundations of China
(11171301, 10771191 and 10471124) and Natural Science Foundation of
Zhejiang Province of China (Y6090105).




\begin{thebibliography}{99}

\bibitem{Hayden} P.~Hayden, R.~Jozsa, D.~Petz and A.~Winter:
Structure of States Which Satisfy Strong Subadditivity of Quantum
Entropy with Equality,  Commun. Math. Phys. \textbf{246}:
359--374(2004).

\bibitem{Roga} W. Roga, M. Fannes and K. \.{Z}yczkowski: Composition of quantum states and
dynamical subadditivity. J. Phys. A: Math. Theor. \textbf{41},
035305 (2008).

\bibitem{Nielsen} M. A. Nielsen and I. L. Chuang: Quantum Computation and Quantum Information, Cambridge University Press, Cambridge (2000).

\bibitem{Ohya}M.~Ohya and D.~Petz: Quantum Entropy and Its Use, Springer-Verlag, Heidelberg, 1993. Second edition 2004.

\bibitem{Choi} M. D. Choi: Completely positive linear maps on complex matrices, Linear algebra and its applications, \textbf{10}:285-290(1975).

\bibitem{Hiai} F. Hiai, M. Mosonyi, D. Petz and C. B\'{e}ny: Quantum $f$-divergence and error correction, Rev. Math.
Phys. \textbf{23}:691-747(2011).

\bibitem{Lindblad} G. Lindblad: Completely Positive Maps and Entropy Inequalities, Commun. Math. Phys. \textbf{40}: 147--151(1975).


\bibitem{Shannon} C. E. Shannon: A mathematical theory of commnication, Bell Syst. Tech. \textbf{27}:379--423(1948).

\bibitem{Alan} A. B. Poritz and J. A. Poritz: On entropy-preserving stochastic averages, Linear Algebra and its Applications \textbf{434}:1425--1442(2011).

\bibitem{Bengtsson} I. Bengtsson and K. \.{Z}yczkowski: Geometry of quantum states, Cambridge University Press, pp. 315(2006).

\end{thebibliography}
\end{document}